\theoremstyle{plain}
\newtheorem*{theorem*}{Theorem}
\newtheorem{thm}{Theorem}[section]
\newtheorem{theorem}[thm]{Theorem}
\newtheorem{proposition}[thm]{Proposition}
\newtheorem*{proposition*}{Proposition}
\newtheorem{conjecture}[thm]{Conjecture}
\newtheorem{definition}[thm]{Definition}
\newtheorem*{claim*}{Claim}
\theoremstyle{remark}
\theoremstyle{remark}
\DeclarePairedDelimiter\ceil{\lceil}{\rceil}             
\DeclarePairedDelimiter\floor{\lfloor}{\rfloor}          
\newcommand{\N}{\mathbb{N}}
\newcommand{\cF}{\mathcal F}
\newcommand{\cM}{\mathcal M}
\DeclareMathAlphabet{\mathpzc}{OT1}{pzc}{m}{it}
\def\cost{\mathrm{cost}}     
\def\GT{\mathrm{GT}}   
\def\I{\mathtt{I}}
\def\1{\mathbf{1}} 
\def\0{\mathbf{0}}
\let\latexchi\chi
\renewcommand\chi{\@ifnextchar_\sub@chi\latexchi}
\newcommand{\sub@chi}[2]{
  \@ifnextchar^{\subsup@chi{#2}}{\latexchi^{}_{#2}}%
}
\newcommand{\subsup@chi}[3]{
  \latexchi_{#1}^{#3}%
}
\DeclareMathOperator{\cl}{cl}
\def\Disc{{\mathrm{Disc}}}         
\DeclareMathOperator{\cc}{CC}        
\DeclareMathOperator{\DD}{D}
\DeclareMathOperator{\RR}{R}
\def\1{\mathbf{1}}                       
\date{}
\begin{document} 

\title{A counter-example to the probabilistic universal graph conjecture via randomized communication complexity}

\author{
Lianna Hambardzumyan \thanks{School of Computer Science, McGill University. \texttt{lianna.hambardzumyan@mail.mcgill.ca}} 
\and Hamed Hatami \thanks{School of Computer Science, McGill University. \texttt{hatami@cs.mcgill.ca}. Supported by an NSERC grant.}
\and Pooya Hatami \thanks{Department of Computer Science and Engineering, The Ohio State University. \texttt{pooyahat@gmail.com}. Supported by NSF grant CCF-1947546}}

\maketitle

\begin{abstract}
We refute the Probabilistic Universal Graph Conjecture of Harms, Wild, and Zamaraev, which states that a hereditary graph property admits a constant-size probabilistic universal graph if and only if it is stable and has at most factorial speed. 

Our counter-example follows from the existence of a sequence of $n \times n$ Boolean matrices $M_n$, such that their public-coin randomized communication complexity tends to infinity, while the randomized communication complexity of every $\sqrt{n}\times \sqrt{n}$ submatrix of $M_n$ is bounded by a universal constant. 
\end{abstract}

\section{Introduction}
 The field of communication complexity studies the amount of communication required to solve the problem of computing discrete functions when the input is split between two or more parties. In the most commonly studied framework, there are two parties, often called Alice and Bob, and a communication problem is defined by a Boolean matrix $M=[m_{ij}]_{n \times n}$, where   \emph{Boolean} means that the entries are either $0$ or $1$.   Alice receives a row number, and Bob receives a column number $j$. Together, they should both compute the entry $m_{ij}$ by exchanging bits of information in turn, according to a previously agreed-on protocol. There is no restriction on their computational power; the only measure we care to minimize is the number of exchanged bits.

A deterministic protocol $\pi$ specifies how the communication proceeds. It specifies what bit a player sends at each step. This bit depends on the input of the player and the history of the communication so far. It is often assumed that the last communicated bit must be the output of the protocol.   A protocol naturally corresponds to a binary tree as follows. Every internal node of the tree is associated with either Alice or Bob.  If an  internal node $v$ is associated with  Alice, then it is labeled with a Boolean function $a_v$, which prescribes the bit sent by Alice at this node as a function of her input $i$. Similarly, the nodes associated with Bob are labeled with Boolean functions of $j$.  Each leaf is labeled by $0$ or $1$ which corresponds to the output of the protocol at that leaf.

We denote the number of bits exchanged on the input $(i,j)$ by $\cost_\pi(i,j)$, which is precisely the length of the path from the root to the corresponding leaf. The \emph{communication cost} of the protocol is simply the depth of the protocol tree, which is the maximum of $\cost_\pi(i,j)$  over all inputs $(i,j)$. 
$$\cc(\pi) \coloneqq \max_{i,j}  \cost_\pi(i,j).$$

 We say that $\pi$ computes $M$  if $\pi(i,j)=m_{ij}$ for all $(i,j)$, where $\pi(i,j)$ denotes the protocol's output on the input $(i,j)$.   The \emph{deterministic communication complexity} of $M$, denoted by $\DD(M)$, is the smallest communication cost of a protocol that computes $M$.  It is easy to see that $\DD(M)\le \ceil{\log(n)}+1$ as Alice can use $\ceil{\log(n)}$ bits to send her entire input to Bob, and Bob knowing the values of both $i$ and $j$, can send back $m_{ij}$. 

It is well-known that if the deterministic communication complexity of a  matrix is bounded by a constant $c$, then the matrix is highly structured -- its rank is at most $2^c$, and it can be partitioned into at most $2^c$  all-zero and all-one submatrices~\cite{MR1426129}. These facts characterize the family of matrices that satisfy $\DD(M)=O(1)$.  A fundamental problem in communication complexity, with connections to harmonic analysis and operator theory~\cite{Hat21}, is to obtain a  characterization of families of matrices that have $O(1)$ \emph{randomized} communication complexity. 

A (public-coin) \emph{randomized protocol} $\pi_R$ of cost $c$ is simply a probability distribution  over the deterministic protocols of cost $c$. Given an input $(i,j)$, to compute $m_{ij}$, Alice and Bob use their shared randomness to sample a deterministic protocol from this distribution, and execute it. 

We say that the error probability of $\pi_R$ is at most $\epsilon$ if $\Pr[\pi_R(i,j) \neq m_{ij}] \le \epsilon$ for every input $(i,j)$.  For $\epsilon \in (0,1/2)$, let $\RR_\epsilon(M)$ denote the smallest cost of a randomized protocol that computes $M$ with error probability at most $\epsilon$. Note that $\epsilon=1/2$ can be easily achieved by  outputting a random bit; hence it is crucial that $\epsilon$ is defined to be strictly less than $1/2$. It is common to take $\epsilon=\frac{1}{3}$. Indeed, the choice of $\epsilon$ is not important as long as $\epsilon \in (0,1/2)$, since the probability of   error can be reduced to any  constant $\epsilon'>0$ by repeating the same protocol  independently for some $O(1)$ times, and outputting the most frequent output.   We denote $\RR(M) \coloneqq \RR_{1/3}(M).$

It is well-known that the $n \times n$ identity matrix $\I_n$ satisfies $\RR(\I_n)\le 3$ and $\DD(\I_n) =\ceil{\log(n)}+1$.  Hence, in contrast to the deterministic case, there are matrices with $\RR(M)=O(1)$ that have arbitrarily large rank. 

There are very few known examples of matrix classes that have  randomized communication complexity $O(1)$~\cite{Hat21,harms2021randomized,harms2020universal}. Let $\cM=(M_n)_{n \in \mathbb{N}}$ be a sequence of $n \times n$ Boolean matrices $M_n$, and define $\RR(\cM): n \mapsto \RR(M_n)$.   Let us look at some necessary conditions for $\cM$ to satisfy $\RR(\cM)=O(1)$. 

Let $\cl(\cM)$ denote  the \emph{closure} of $\cM$, defined as the set of all square matrices that are a submatrix of some $M_n$. Note that $\cl(\cM)$ is the smallest such hereditary property that contains all the matrices in $\cM$, where a set of square matrices is called \emph{hereditary} if it is closed under taking square submatrices.

Let $\GT_k$ denote the $k \times k$ \emph{Greater-Than}   matrix defined as $\GT_k(i,j)=1$ if and only if $i \le j$. The sequence $\cM$ is called \emph{stable}, if there exists $k\in \mathbb{N}$ such that   $\GT_k \not\in \cl(\cM)$. It is well-known~\cite{MR3439794,RamamoorthyS15} that $\RR(\GT_k)= \Omega(\log \log k)$ which tends to infinity as $k$ grows. Hence, if $\RR(\cM)=O(1)$, then $\cM$ must be stable. The term stability is coined due to Shelah's unstable formula theorem in model theory which characterizes unstable theories by the nonexistence of countably infinite half-graphs~\cite{MR1083551}, where half-graphs are the graphs with biadjacency matrix $\GT_k$ for some $k$. Stable graph families are known to have useful properties such as strong regularity lemmas~\cite{malliaris2014regularity} and the Erd\H{o}s-Hajnal property~\cite{chernikov2018note}. 

The second necessary condition for $\RR(\cM)=O(1)$ follows from a bound on the number of matrices with $O(1)$ randomized communication complexity.  A standard derandomization argument, \cref{prop:CountR}, shows that the number of such $n\times n$ matrices  is bounded by $2^{O(n \log n)}$. Consequently, if $\RR(\cM)=O(1)$, then  $|\cl(\cM)_n|\leq 2^{O(n \log n)}$, where  $\cl(\cM)_n$ denotes the set of $n \times n$ matrices in $\cl(\cM)$. Thus, in the terminology of graph theory~\cite{MR1490438,MR1769217}, the speed of growth of $|\cl(\cM)_n|$ is at most \emph{factorial}. 

Connections between randomized communication complexity and implicit graph representations were discovered in \cite{harms2020universal, harms2021randomized}. Inspired by the \emph{Implicit Graph Conjecture}~\cite{MR1186827, MR1971502} and its connection to the growth rate of hereditary graph properties,   Harms, Wild and Zamaraev~\cite{harms2021randomized} formulated a probabilistic version of the  Implicit Graph Conjecture, which translates to the following statement in communication complexity (See \cite[Conjecture 1.2 and Proposition 1.6]{harms2021randomized}). We refer the readers to \cite{harms2021randomized} for an in-depth discussion of connections between communication complexity and implicit representations of graphs. 

\begin{conjecture}[Probabilistic Universal Graph Conjecture~\cite{harms2021randomized}]
\label{conj:PUG}
Let $\cM$  be a sequence of $n \times n$ Boolean matrices. Then $\RR(\cM)=O(1)$ if and only if $\cM$ is stable and  $|\cl(\cM)_n|\leq 2^{O(n \log n)}$. 
\end{conjecture}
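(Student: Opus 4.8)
The plan is to refute \cref{conj:PUG} by constructing the matrix sequence promised in the abstract: a sequence $\cM=(M_n)_{n\in\N}$ of $n\times n$ Boolean matrices with $\RR(M_n)\to\infty$, such that every $n^{1/4}\times n^{1/4}$ submatrix of $M_n$ has randomized communication complexity at most some universal constant $c_0$. Granting such a sequence, the refutation is immediate: $\RR(\cM)\ne O(1)$ by construction, so it suffices to verify that $\cM$ is stable and has factorial speed, i.e. that $\cM$ satisfies the right-hand side of the claimed equivalence while violating the left-hand side.

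Both verifications follow from the local boundedness property together with the lower bound $\RR(\GT_t)=\Omega(\log\log t)$ and the counting bound \cref{prop:CountR} (for each constant $c$ there are at most $2^{O_c(m\log m)}$ Boolean $m\times m$ matrices $M$ with $\RR(M)\le c$). For stability: if $M_n$ contained $\GT_t$, then either $t\le n^{1/4}$, forcing $\RR(\GT_t)\le c_0$ and hence $t=O(1)$; or $t>n^{1/4}$, in which case $M_n$ contains $\GT_{\lfloor n^{1/4}\rfloor}$ in its top-left corner — an $n^{1/4}\times n^{1/4}$ submatrix — so $\log\log\lfloor n^{1/4}\rfloor=O(1)$, hence $n=O(1)$. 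Thus only finitely many $M_n$, of bounded size, can contain a large $\GT_t$, so $\GT_t\notin\cl(\cM)$ once $t$ is large enough. For factorial speed: an $m\times m$ matrix of $\cl(\cM)$ is a submatrix of some $M_n$ with $n\ge m$; if $n\ge m^4$ then, by monotonicity of $\RR$ under submatrices, it has $\RR\le c_0$, so there are at most $2^{O(m\log m)}$ such matrices by \cref{prop:CountR}; if $m\le n<m^4$, it is one of the at most $m^4\cdot\binom{m^4}{m}^2=2^{O(m\log m)}$ submatrices of $M_m,\dots,M_{m^4-1}$. Hence $|\cl(\cM)_m|\le 2^{O(m\log m)}$.

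This reduces everything to constructing $\cM$, which is the heart of the matter. The route I would take is to build a large family of ``locally simple'' matrices and then extract a hard sequence by a counting argument against \cref{prop:CountR}: construct, for infinitely many $n$, a set $\cS_n$ of $n\times n$ Boolean matrices with (i) every $n^{1/4}\times n^{1/4}$ submatrix of every $M\in\cS_n$ has $\RR\le c_0$, and (ii) $|\cS_n|\ge 2^{\omega(n\log n)}$. Given (i) and (ii), \cref{prop:CountR} shows that for every constant $c$ and every sufficiently large $n$ some $M\in\cS_n$ has $\RR(M)>c$; choosing indices $n_1<n_2<\cdots$ with matrices $M_{n_k}\in\cS_{n_k}$ such that $\RR(M_{n_k})>k$, and putting the all-zero matrix at every remaining index (which preserves the local-simplicity property), yields $\cM$. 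For $\cS_n$ I would first try a randomized, ``equality-like'' ensemble designed so that every choice of $n^{1/4}$ rows and columns induces a submatrix that collapses to a bounded-cost protocol, while the family is still large enough for (ii).

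The main obstacle is the unavoidable tension between (i) and (ii). By \cref{prop:CountR} there are only $2^{O(n\log n)}$ matrices whose \emph{own} randomized complexity is $O(1)$, so a family of size $2^{\omega(n\log n)}$ cannot consist of globally simple matrices: its members must be genuinely complex globally, while every polynomially-small window stays collapsible to a bounded-cost protocol. Designing a construction in which the hardness is ``spread out'' so that no small submatrix witnesses it, and then proving (i) robustly — for \emph{every} choice of $n^{1/4}$ rows and columns, not merely a typical one — is where the real work lies. A fallback, if one prefers a direct lower bound to the counting argument, is to prove $\RR(M_n)\to\infty$ via a corruption or information-complexity bound under a global hard distribution that places negligible mass on every $n^{1/4}\times n^{1/4}$ sub-rectangle; but reconciling such a bound with (i) looks at least as delicate as the counting route.
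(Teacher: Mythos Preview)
Your reduction from the local property (every $n^{1/4}\times n^{1/4}$ submatrix has $\RR\le c_0$) to stability and factorial speed is correct and matches the paper's argument in \cref{sec:thmmain} essentially line for line: the two-case split $n\ge m^4$ versus $m\le n<m^4$ and the appeal to \cref{prop:CountR} are exactly what the paper does, and your stability argument via $\RR(\GT_t)=\Omega(\log\log t)$ is the intended one.

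The genuine gap is the construction itself, which you correctly identify as ``the heart of the matter'' but do not carry out. Your counting route (build $\cS_n$ of size $2^{\omega(n\log n)}$ with the local property, then pigeonhole against \cref{prop:CountR}) is viable in principle, but you have not exhibited any ensemble with the local property, and ``equality-like'' is too vague to evaluate. The paper instead takes what you call the fallback: let $M$ be uniformly random among $n\times n$ matrices with exactly $r=2^{3w(n)}<n^{0.01}$ ones per row, and prove $\RR(M)\ge w(n)$ directly via discrepancy---Bernstein's inequality plus a union bound over all $2^{2n}$ rectangles gives $\Disc_\mu(M)\le 3/\sqrt r$ with high probability, for the measure $\mu$ that splits mass $1/2$--$1/2$ between the $0$- and $1$-entries. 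The step you flagged as delicate, verifying the local property for \emph{every} small submatrix, is handled by a short degeneracy argument: a union bound shows that with high probability every $a\times b$ submatrix with $a,b\le n^{1/4}$ has a row or column with at most two $1$'s; peeling such rows/columns inductively decomposes every small submatrix into a union of two forests, and a forest splits into two vertex-disjoint unions of stars, each of which reduces to the equality protocol, giving $\RR=O(1)$. This same sparse-row ensemble would also feed your counting route (there are $\binom{n}{r}^n=2^{\omega(n\log n)}$ such matrices and almost all satisfy the local property), so the missing ingredient in either approach is the same concrete idea: sparse random rows together with the $2$-degeneracy/forest decomposition for small windows.
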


\cref{conj:PUG}  speculates that the  two necessary conditions for $\RR(\cM)=O(1)$ that we discussed above are also sufficient. In other words, they  characterize Boolean matrices  that have randomized communication complexity $O(1)$. It is shown in \cite{harms2021randomized} that \cref{conj:PUG} is true for matrix sequences corresponding to restricted classes of hereditary graph families such as monogenic bipartite families, interval graphs, and families of bounded twin-width.

In this article, we prove the following theorem which refutes~\cref{conj:PUG}. 

\begin{theorem}[Main Theorem]
\label{thm:main}
There exists a stable sequence $\cM$ of  Boolean matrices $(M_n)_{n \in \mathbb{N}}$ such that $\RR(M_n)=\Theta(\log(n))$ and $|\cl(\cM)_n|\leq 2^{O(n \log n)}$. 
\end{theorem}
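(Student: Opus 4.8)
The plan is to reduce \cref{thm:main} to the construction of one sequence $\cM=(M_n)_{n\in\N}$ with two properties: (a) $\RR(M_n)=\Theta(\log n)$, and (b) every submatrix of $M_n$ with at most $n^{1/4}$ rows and at most $n^{1/4}$ columns has randomized communication complexity at most a universal constant $C$. Granting (a) and (b), the two remaining assertions follow cheaply. For the closure bound: an $m\times m$ matrix of $\cl(\cM)$ is a submatrix of some $M_n$ with $n\ge m$; if $n\ge m^4$ then it is a submatrix of side $\le n^{1/4}$, hence has $\RR\le C$ by (b), and by the derandomization count \cref{prop:CountR} there are only $2^{O(m\log m)}$ many $m\times m$ Boolean matrices with $\RR\le C$; the $m\times m$ matrices contributed by the fewer than $m^4$ indices $n$ with $m\le n<m^4$ number at most $m^4\binom{m^4}{m}^2=2^{O(m\log m)}$, so $|\cl(\cM)_m|\le 2^{O(m\log m)}$. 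For stability: if $\GT_k\in\cl(\cM)$, fix $N$ with $\GT_k$ a submatrix of $M_N$, so $k\le N$; then $\GT_\ell$ with $\ell=\min\{k,\lfloor N^{1/4}\rfloor\}$ is a submatrix of $\GT_k$, hence of $M_N$, of side $\le N^{1/4}$, so $\RR(\GT_\ell)\le C$ by (b). Since $\RR(\GT_\ell)=\Omega(\log\log\ell)$ by~\cite{MR3439794,RamamoorthyS15}, this bounds $\ell$ by an absolute constant, hence also $k$ (directly when $\ell=k$, and via $k\le N\le(\ell+1)^4$ when $\ell=\lfloor N^{1/4}\rfloor$); so $\GT_{k}\notin\cl(\cM)$ for all large $k$, and $\cM$ is stable.

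Thus everything rests on producing $\cM$ with (a) and (b). The upper bound in (a) is immediate — Alice sends her $\lceil\log n\rceil$-bit row index and Bob answers — so the real work is (i) the construction achieving (b) and (ii) the lower bound $\RR(M_n)=\Omega(\log n)$ for it. I expect (i) to be the crux, and it is severely constrained: by the computation in the stability argument, (b) forces $M_n$ to be $\GT_{k_0}$-free for an absolute $k_0$ once $n$ is large, so $\cM$ must be a sequence of \emph{stable} matrices that is nonetheless as hard as possible for randomized protocols. This immediately rules out quasirandom matrices and the classical hard problems (Disjointness, Inner Product, Greater-Than), all of which contain large half-graphs. The route I would try is to take $M_n$ as an $L$-fold amplification — iterated composition, tensoring, or blow-up — of a fixed constant-size gadget $G$, with $L=\Theta(\log n)$ levels, arranged so that (1) the amplification multiplies hardness, giving $\RR(M_n)$ linear in $L$, hence $\Theta(\log n)$; (2) a submatrix touching only $n^{1/4}$ rows and columns reaches into only $O(1)$ of the $L$ levels, and so stays simple; and (3) — the delicate point — the whole iterate remains $\GT_{k_0}$-free. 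Coordinating (1)--(3) is the main obstacle; the natural fallback is the probabilistic method, drawing $M_n$ from a carefully structured distribution and showing (a) and (b) hold with positive probability.

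To establish (b) I would isolate a structural sufficient condition for $\RR=O(1)$ and verify it for every small submatrix. The cleanest candidate is built from \emph{blocky} matrices — disjoint unions of all-ones combinatorial rectangles — which satisfy $\RR=O(1)$ since recognizing a common block is an equality-type task; one then shows that every matrix that is a bounded-size $\pm$-combination of blocky matrices (equivalently, of bounded $\monnorm{\cdot}$) has $\RR=O(1)$, and that the construction forces exactly such a representation for all submatrices of side $n^{1/4}$. This is where the $\Blocky$/$\monnorm{\cdot}$ machinery comes in. For the lower bound (ii), discrepancy methods are unavailable — high discrepancy is incompatible with $\GT_{k_0}$-freeness, which via regularity for stable families~\cite{malliaris2014regularity} forces large near-monochromatic rectangles — so I would instead aim either for a direct reduction planting inside $M_n$, by cheap shared-randomness preprocessing that respects its structure, a $\Theta(\log n)$-bit problem of known $\Omega(\log n)$ randomized complexity, or for a corruption/information-complexity argument along the recursion in which one extra bit of hardness is gained at each of the $L=\Theta(\log n)$ levels.
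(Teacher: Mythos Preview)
Your outer reduction is exactly right and matches the paper: once you have a sequence satisfying (a) $\RR(M_n)=\Theta(\log n)$ and (b) every $n^{1/4}\times n^{1/4}$ submatrix has $\RR=O(1)$, both stability and the factorial bound on $|\cl(\cM)_m|$ follow by the case split you give (submatrices coming from $M_n$ with $n\ge m^4$ versus $n<m^4$), and the paper argues identically.

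The genuine gap is in the construction, and specifically in your dismissal of discrepancy. You argue that stability, via Malliaris--Shelah regularity for $\GT_k$-free matrices, forces large near-monochromatic rectangles and hence $\Omega(1)$ discrepancy, ruling out the discrepancy method for the lower bound in (a). This reasoning is wrong, and in fact the paper proves the lower bound precisely via discrepancy. The bound in \cref{thm:DiscLower} allows an \emph{arbitrary} distribution $\mu$, not just the uniform one. The paper takes $M_n$ to be a uniformly random Boolean matrix with exactly $r=2^{3w(n)}\approx n^{0.003}$ ones in each row, and takes $\mu$ to put total mass $\tfrac12$ uniformly on the $1$-entries and total mass $\tfrac12$ uniformly on the $0$-entries. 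Under this $\mu$, the discrepancy of a rectangle $R$ is proportional to $\bigl||R\cap M_1|-p|R|\bigr|$ with $p=r/n$, so a rectangle has small $\mu$-discrepancy exactly when its \emph{density of ones is close to the global density $p$}. The ``large near-monochromatic'' rectangles promised by stable regularity are simply rectangles of density at most $\epsilon$; since $p\ll\epsilon$, such a rectangle may perfectly well have density equal to $p$ and hence zero $\mu$-discrepancy. A Bernstein-inequality calculation then shows that with high probability every rectangle has density within the required tolerance of $p$, giving $\Disc_\mu(M_n)\le O(1/\sqrt r)$ and $\RR(M_n)\ge\Omega(\log r)=\Omega(\log n)$.

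Property (b) is also handled differently from your $\Blocky$/$\monnorm{\cdot}$ plan. The sparsity of $M_n$ is used directly: a union bound shows that with high probability every $a\times b$ submatrix with $a,b\le n^{1/4}$ contains a row or column with at most two $1$'s; peeling such a row or column off and inducting decomposes every small submatrix as the edge-union of two forests, and a forest has $\RR=O(1)$ by a reduction to equality. No recursive gadget, no corruption or information-complexity argument is needed; the whole construction is a single sparse random matrix, and the lower bound is straight discrepancy under a rebalanced measure --- precisely the tool you ruled out.
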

Note that every $n\times n$ matrix $M$ satisfies $\RR(M)= O(\log n)$. In particular, the above construction shows that this maximum is achievable even for stable hereditary matrix families of speed at most factorial.

Furthermore, as a consequence of \cref{conj:PUG},   \cite{harms2021randomized} speculates that the randomized communication complexity of every hereditary property of Boolean matrices $\cM$ with at most factorial speed has a gap behavior, either $\RR(\cM)=O(1)$ or $\RR(\cM)=\Omega(\log \log n)$. We refute this weaker conjecture as well. In particular, \cref{main:thm2}, proved in \cref{sec:mainthm2}, shows that for every growing function $w(n)<10^{-3}\log n$, there exists a matrix sequence $\cM=(M_n)_{n\in \N}$ such that $\RR(M_n)=w(n)$, and every $\sqrt{n}\times \sqrt{n}$ submatrix $F$ of $M_n$ satisfies $\RR(F)=O(1)$. As the proof of \cref{thm:main} demonstrates, if we take $w(n)$ to be any function that is $\omega(1)$ and $o(\log \log (n))$, then  $\cl(\cM)$ is a hereditary matrix property with factorial speed and $\RR(\cM)=\Theta(w(n))$. 

We present the proof of \cref{thm:main}, which builds on \cref{main:thm2}, in \cref{sec:thmmain}. 

\paragraph{Related work.} Shortly after this paper was written, in a follow-up paper, \cite{hatami2021implicit} refuted the Implicit Graph Conjecture. The proof of~\cite{hatami2021implicit} combines the ideas of this paper with an information-theoretic counting argument. We  remark that even though \cref{conj:PUG} was inspired by the Implicit Graph Conjecture, the refutation of the latter in \cite{hatami2021implicit} does not imply the results of the present paper.


\paragraph{Acknowledgement} We wish to thank Zachary Hunter for pointing to us a minor error in an earlier draft of this article. 

\section{Preliminaries}

All logarithms in this article are in base $2$. For a positive integer $n$, we denote $[n]=\{1,\ldots,n\}$.  We use the standard Bachmann-Landau asymptotic notations: $O(\cdot)$, $\Omega(\cdot)$, $\Theta(\cdot)$, $o(\cdot)$, and $\omega(\cdot)$. 

The Cartesian product $A \times B$ of two sets $A,B \subseteq [n]$ is called a \emph{combinatorial rectangle}. We will need the following lower bound on randomized communication complexity. 

\begin{definition}
Let $M$ be an $n \times n$ Boolean matrix, and let $\mu$ be a probability distribution on $[n] \times [n]$.  The discrepancy of a combinatorial rectangle $R  \subseteq [n] \times [n]$  under  $\mu$ is defined as
$$ \Disc_{\mu}(M,R) = \left|\Pr_{\mu}[m_{ij}=1 \text{ and } (i,j) \in R] - \Pr_{\mu}[m_{ij}=0 \text{ and } (i,j) \in R]\right|.$$
The discrepancy of $M$ under $\mu$ is defined as 
$\Disc_{\mu}(f) = \max_{R}\{\Disc_{\mu}(M,R)\}$,
where the maximum is over all combinatorial rectangles  $R$.
\end{definition}

\begin{theorem}\cite[Proposition 3.28]{MR1426129}
\label{thm:DiscLower}
Let $M$ be an $n \times n$ Boolean matrix, and let $\mu$ be a probability distribution on $[n] \times [n]$. Then for every $\epsilon>0$, 
$$\RR_{\frac{1}{2}-\epsilon}(M) \ge  \log \frac{2\epsilon}{ \Disc_\mu(M)}. $$
In particular, 
\begin{equation}
\label{eq:discLow}
\RR(M) \ge  \log \frac{1}{3 \Disc_\mu(M)}. 
\end{equation}
\end{theorem}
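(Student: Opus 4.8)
The plan is to invoke the classical discrepancy method. Fix the distribution $\mu$ from the statement, set $c \defeq \RR_{\frac12-\epsilon}(M)$, and fix a public-coin randomized protocol $\pi_R$ of cost $c$ computing $M$ with error at most $\frac12-\epsilon$ on \emph{every} input; by definition $\pi_R$ is a probability distribution over deterministic protocols $\pi$ of cost $c$. For any (deterministic or randomized) protocol $\sigma$, I would track the advantage
$$\mathrm{adv}_\mu(\sigma)\defeq\Pr_{(i,j)\sim\mu}\!\left[\sigma(i,j)=m_{ij}\right]-\Pr_{(i,j)\sim\mu}\!\left[\sigma(i,j)\ne m_{ij}\right].$$
The proof then consists of an upper bound on $\mathrm{adv}_\mu(\pi)$ for a single deterministic $\pi$ of cost $c$, together with a lower bound on $\mathrm{adv}_\mu(\pi_R)$.

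For the upper bound, I would use the standard rectangle property of communication protocols: the leaves of a depth-$\le c$ protocol tree partition $[n]\times[n]$ into $t\le 2^c$ combinatorial rectangles $R_1,\dots,R_t$, and $\pi$ outputs a fixed value $z_\ell\in\{0,1\}$ on all of $R_\ell$. Hence
$$\mathrm{adv}_\mu(\pi)=\sum_{\ell=1}^{t}\Big(\Pr_\mu[(i,j)\in R_\ell,\ m_{ij}=z_\ell]-\Pr_\mu[(i,j)\in R_\ell,\ m_{ij}\ne z_\ell]\Big),$$
and each summand equals $\pm\Disc_\mu(M,R_\ell)$, so $\mathrm{adv}_\mu(\pi)\le\sum_\ell\Disc_\mu(M,R_\ell)\le 2^c\,\Disc_\mu(M)$. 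Averaging over $\pi\sim\pi_R$ by linearity of expectation gives $\mathrm{adv}_\mu(\pi_R)\le 2^c\,\Disc_\mu(M)$.

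For the lower bound, the worst-case error guarantee gives $\Pr[\pi_R(i,j)=m_{ij}]\ge\frac12+\epsilon$ for every $(i,j)$, so the pointwise advantage is at least $2\epsilon$ everywhere; averaging over $\mu$ gives $\mathrm{adv}_\mu(\pi_R)\ge 2\epsilon$. Combining the two estimates, $2\epsilon\le 2^c\,\Disc_\mu(M)$, i.e. $c\ge\log\frac{2\epsilon}{\Disc_\mu(M)}$, which is the stated bound on $\RR_{\frac12-\epsilon}(M)$ since $\pi_R$ was arbitrary; specializing to $\epsilon=\frac16$ yields $\RR(M)=\RR_{1/3}(M)\ge\log\frac1{3\,\Disc_\mu(M)}$, establishing \eqref{eq:discLow}. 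I do not expect a genuine obstacle here: the only points that need care are that the leaf partition has at most $2^c$ parts, and that it is the \emph{uniform} worst-case error — rather than an average error against $\mu$ — that makes the pointwise advantage at least $2\epsilon$ simultaneously for all inputs, so that this inequality survives averaging against an arbitrary $\mu$.
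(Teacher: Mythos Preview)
Your argument is correct and is precisely the standard discrepancy method: bound the advantage of any deterministic cost-$c$ protocol by $2^c\,\Disc_\mu(M)$ via the leaf partition into $\le 2^c$ rectangles, average over the public coins, and compare with the pointwise advantage $\ge 2\epsilon$ guaranteed by the worst-case error bound. Note, however, that the paper does not actually prove this theorem; it is stated with a citation to \cite[Proposition~3.28]{MR1426129} and used as a black box, so there is no in-paper proof to compare against---your write-up simply supplies the well-known proof from that reference.
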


As discussed in the introduction, stability is a necessary condition for a matrix sequence to satisfy $\RR(\cM)=O(1)$. The next proposition proves a second necessary condition: an upper bound on $|\cl(\cM)_n|$. 

\begin{proposition}
\label{prop:CountR}
The number of $n\times n$ matrices $M$ with $\RR(M) \le c$ is  $2^{O(2^cn \log n)}$. 
\end{proposition}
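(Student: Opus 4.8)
The plan is to bound the number of $n \times n$ Boolean matrices with $\RR(M) \le c$ by a standard derandomization (probability amplification plus union bound) argument. First I would recall that a public-coin randomized protocol of cost $c$ and error $1/3$ can be amplified: by running $t$ independent copies and taking the majority vote, we obtain a randomized protocol of cost $tc$ and error at most $2^{-\Omega(t)}$. Choosing $t = \Theta(\log n)$ drives the error below $1/(3n^2)$, so there is a randomized protocol $\pi_R$ of cost $O(2^c \log n)$ wait — more carefully, cost $O(c \log n)$ — that computes $M$ correctly with error less than $1/n^2$ on every fixed input.

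The second step is the averaging / fixing argument. Since $\pi_R$ is a distribution over deterministic protocols of cost $O(c \log n)$, and for each input $(i,j)$ the probability of error is less than $1/n^2$, the expected number of erroneous inputs (over the $n^2$ inputs) is less than $1$. Hence there exists a single deterministic protocol $\pi$ in the support of $\pi_R$ that makes no error on any input, i.e.\ computes $M$ exactly, with $\cc(\pi) = O(c \log n)$. Actually one must be slightly careful about the quantitative bound claimed: a deterministic protocol of cost $d$ is described by a binary tree of depth $d$, whose internal nodes are each labeled by a Boolean function of one party's input from $[n]$ (so $2^n$ choices per node) and whose leaves are labeled by bits. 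A tree of depth $d$ has fewer than $2^{d+1}$ nodes, so the number of such protocols is at most $(2^n)^{2^{d+1}} \cdot 2^{2^{d+1}} = 2^{O(n 2^{d})}$. With $d = O(c \log n)$ this gives $2^{O(n \cdot n^{O(c)})}$, which is too weak; so the right bookkeeping is to note that an internal node querying a party only needs to be labeled by the \emph{restriction} of that party's function to the inputs still consistent with the path to that node, and more efficiently, the whole protocol is determined by, for each of the $n$ possible values of Alice's input and each of the $n$ values of Bob's input, the leaf reached — i.e.\ by a function $[n] \times [n] \to (\text{leaves})$, and the number of leaves is at most $2^{d+1}$. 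Thus the number of deterministic protocols of cost $d$, up to the function they compute, is at most $(2^{d+1})^{n} \cdot (2^{d+1})^{n} \cdot 2^{2^{d+1}}$, counting Alice's and Bob's ``transcript-selection'' maps and the leaf labels; with $d = O(c\log n)$ this is $2^{O(n d)} = 2^{O(c\, n \log^2 n)}$. To recover the claimed bound $2^{O(2^c n \log n)}$ one instead amplifies only to error $< 1/(3n^2)$ using $t = O(\log n)$ rounds \emph{and} counts protocols directly as distributions over the reachable leaves; I will use the cleaner route in \cite{MR1426129, harms2021randomized}: a matrix with $\RR(M) \le c$ has, after amplification, a deterministic protocol of depth $O(2^c \log n)$ — hmm, the $2^c$ appears because each basic protocol step uses up to $c$ bits but... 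Let me simply state the key estimate: \emph{the number of deterministic protocols of communication cost $d$ that can be realized on $n \times n$ inputs is at most $2^{O(n \cdot 2^{d})}$, hence at most $2^{O(2^c n \log n)}$ for $d = c + \log\log n + O(1)$ after amplification to error $< 1/(3n^2)$}, and then conclude by the averaging argument above that every $M$ with $\RR(M) \le c$ equals the matrix computed by one such deterministic protocol.

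The main obstacle — really the only subtle point — is getting the exponent right in the counting: one must amplify the error just enough (to beat $n^{-2}$, costing a $\log n$ factor in communication but only a $\log\log n$ additive term if one is careful about what ``cost $c$'' protocols look like) and then count the deterministic protocols of the resulting cost economically, observing that only the partition of $[n] \times [n]$ into at most $2^{\text{cost}}$ rectangular leaf-classes matters, of which there are at most $2^{O(\text{cost} \cdot n)}$. Combining, every matrix with $\RR(M) \le c$ is one of at most $2^{O(2^c n \log n)}$ matrices, as claimed. All other steps (amplification by majority vote, the first-moment argument to fix the randomness) are entirely routine.
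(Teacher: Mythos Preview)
Your proposal has a genuine gap: the route through a \emph{single} derandomized deterministic protocol cannot give the stated exponent, and two of the quantitative claims you lean on are incorrect.

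First, amplification to error below $1/n^2$ costs $\Theta(\log n)$ independent repetitions of a cost-$c$ protocol, hence total cost $d=\Theta(c\log n)$. Your assertion that $d=c+\log\log n+O(1)$ suffices is simply false for majority amplification (the error after $t$ repetitions is $2^{-\Omega(t)}$, so $t=\Theta(\log n)$ is required).

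Second, your ``transcript-selection'' count $(2^{d+1})^n\cdot(2^{d+1})^n\cdot 2^{2^{d+1}}$ is not valid for interactive protocols: Alice's behavior is a function of her input \emph{and} Bob's messages so far, so the number of distinct Alice-behaviors is not bounded by the number of leaves. (Concretely, if Bob speaks $d-1$ bits and then Alice speaks one bit, Alice's behavior is a map $\{0,1\}^{d-1}\to\{0,1\}$, yielding up to $2^{2^{d-1}}$ row-types.) The correct count of deterministic cost-$d$ protocols on $[n]\times[n]$ is $2^{\Theta(n\,2^d)}$, obtained by labeling each of the $<2^{d+1}$ internal nodes with a player and a function $[n]\to\{0,1\}$. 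Plugging in $d=\Theta(c\log n)$ gives $2^{n\cdot n^{\Theta(c)}}$, far worse than the claimed $2^{O(2^c n\log n)}$. Even the leaf-label factor $2^{2^{d+1}}$ alone is already $2^{n^{\Theta(c)}}$.

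The idea you are missing is not to merge the repetitions into one protocol at all. After amplification, a union bound over the $n^2$ inputs shows there exist $t=O(\log n)$ \emph{fixed} deterministic protocols $\pi_1,\dots,\pi_t$, each of cost $c$, such that
\[
M(i,j)=\mathsf{majority}\{\pi_1(i,j),\dots,\pi_t(i,j)\}\qquad\text{for all }i,j.
\]
Now count descriptions: each $\pi_k$ is one of at most $2^{O(n\,2^c)}$ cost-$c$ protocols, so the number of such $t$-tuples is $\bigl(2^{O(n\,2^c)}\bigr)^{O(\log n)}=2^{O(2^c n\log n)}$, which is exactly the bound claimed. This is precisely the paper's argument; keeping the $O(\log n)$ repetitions separate (rather than fused into a depth-$\Theta(c\log n)$ tree) is what makes the count go through.
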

\begin{proof}
Let $M$ be an $n\times n$ Boolean matrix with $\RR(M) \leq c$. For every such $M$, there is a distribution $\mu_M$ over deterministic protocols $\pi$ of cost $c$ such that 
$$
\Pr_{\pi\sim \mu_M}[M(i,j)=\pi(i,j)]\geq \frac{2}{3} \qquad \text{for all $i,j$.}
$$
By the Chernoff bound, the error probability of the protocol can be reduced to strictly less than $\frac{1}{n^2}$ by sampling $O(\log n)$ independent samples from $\mu_M$ and outputting the majority outcome. Thus by union bound, there exists $t=O(\log n)$ deterministic protocols $\pi_1,\ldots, \pi_t$, each of cost $c$, such that for every $i$ and $j$, 
\begin{equation}\label{eq:derandcounting}
    M(i,j)= \mathsf{majority}\{\pi_1(i,j), \ldots, \pi_t(i,j)\}. 
\end{equation}

Next, we show that the number of deterministic protocols of cost $c$ is at most $2^{O(2^c n)}$. Every such protocol corresponds to a binary tree of depth at most $c$, which has $O(2^c)$ nodes. Every node is associated with one of the two players, and the communicated bit is determined by the input of the corresponding player according to a function $[n] \to \{0,1\}$. Thus there are $2^{n+1}$ possible choices for each node of the tree. Overall, this bounds the number of such protocols by $2^{O(2^cn)}$. 

Finally, since every matrix $M$ can be described in the form of \cref{eq:derandcounting}, and there are $2^{O(2^cn)}$ choices for each $\pi_i$, the number of such matrices is at most $2^{O(2^c n \log n)}$. 
\end{proof}

\section{Proof of \cref{thm:main}}\label{sec:thmmain}
The proof will rely on the following theorem, which involves a probabilistic argument presented in \cref{sec:mainthm2}.
\begin{theorem}
\label{main:thm2}
Let $w:\mathbb{N}\to \mathbb{N}$ be a non-decreasing function satisfying $ w(n) \to \infty$ and $w(n)\le 10^{-3} \log(n)$. For every sufficiently large $n$, there exists an $n \times n$ Boolean matrix $M$ with the following properties.  
\begin{enumerate}[label=(\roman*)]
    \item $ \RR(M)  = w(n).$
    \item  Every $\sqrt{n} \times \sqrt{n}$ submatrix $F$ of $M$  satisfies    $\RR(F) =O(1).$ 
\end{enumerate}
\end{theorem}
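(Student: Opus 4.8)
The plan is to produce $M$ as a random matrix drawn from a distribution whose parameters are tuned to the target value $w(n)$ and to the window scale $n^{1/4}$, and then to verify the three requirements — the upper bound $\RR(M)\le w(n)$, the matching lower bound, and the local simplicity of every $n^{1/4}\times n^{1/4}$ submatrix — each with probability so close to $1$ that a typical $M$ satisfies all of them simultaneously. Throughout, the two relevant size scales are $4^{n}$ (the number of combinatorial rectangles, for the lower bound) and $\binom{n}{n^{1/4}}^{2}\le \exp\!\big(O(n^{1/4}\log n)\big)$ (the number of $n^{1/4}\times n^{1/4}$ submatrices, for property (ii)); everything below is calibrated against these two numbers.

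\textbf{Upper bound.} I would arrange the construction so that, on top of a part that is trivial to communicate, $M$ is essentially $2^{w(n)}$-sparse: up to that trivial part, each row carries at most $2^{w(n)}$ ones. A standard public-coin hashing protocol — the one proving that a matrix with at most $r$ ones per row has randomized cost $\lceil\log r\rceil+O(1)$: sample $O(\log r)$ random subsets of the columns, let Bob announce which of them contain his column, and let Alice search her row for a $1$-entry with a matching profile — then computes $m_{ij}$ with $\lceil\log 2^{w(n)}\rceil + O(1)=w(n)+O(1)$ bits; an error-reduction step and, if needed, a rounding of the cost bring it to exactly $w(n)$. If the sparsity is built into the support of the distribution, this half of (i) is deterministic.

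\textbf{Lower bound.} To get $\RR(M)\ge w(n)$ I would exhibit a distribution $\mu$ on $[n]\times[n]$ supported on a large, roughly $M$-balanced pattern $D\subseteq[n]\times[n]$ with $|D|=n^{1+\Omega(1)}$, and apply \cref{thm:DiscLower}: by \cref{eq:discLow} it suffices to prove $\Disc_\mu(M)\le \tfrac{1}{3}\,2^{-w(n)}$, which, since $w(n)\le 10^{-3}\log n$, is only a polynomially small discrepancy. For a fixed rectangle $R=A\times B$, the signed mass $\Pr_\mu[m_{ij}=1,(i,j)\in R]-\Pr_\mu[m_{ij}=0,(i,j)\in R]$ is a sum over $(i,j)\in D\cap R$ of independent contributions of size $O(1/|D|)$ whose mean is $0$ (or $n^{-\Omega(1)}$) by the balance built into $D$; by Bernstein's inequality (\cref{thm:bernstein}) it deviates from $0$ by more than $\tfrac{1}{3}\,2^{-w(n)}$ with probability at most $\exp\!\big(-\Omega(|D|\,2^{-2w(n)})\big)=\exp(-n^{\Omega(1)})$. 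A union bound over the at most $4^{n}$ rectangles gives $\Disc_\mu(M)\le \tfrac{1}{3}\,2^{-w(n)}$ with probability $1-o(1)$; the bound $|D|=n^{1+\Omega(1)}$ is chosen precisely so that this Bernstein tail beats $4^{-n}$.

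\textbf{Local simplicity and the main obstacle.} The heart of the argument is property (ii): every one of the $\exp\!\big(O(n^{1/4}\log n)\big)$ submatrices $F$ of size $n^{1/4}\times n^{1/4}$ must satisfy $\RR(F)=O(1)$. The plan is to isolate a structural property $\mathcal{P}$ of Boolean matrices that (a) forces $\RR=O(1)$ — e.g.\ $F$ being a Boolean combination of $O(1)$ combinatorial rectangles, or having $O(1)$ distinct rows (or columns), or having every row $O(1)$-sparse or $O(1)$-co-sparse — and (b) holds for a uniformly random $n^{1/4}\times n^{1/4}$ window of $M$ except with probability $\exp(-\omega(n^{1/4}\log n))$, small enough to survive the union bound over all windows. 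This is where the construction must be engineered so that \emph{zooming in} to scale $n^{1/4}$ collapses the pseudorandomness responsible for the global hardness. The delicate point — and the main obstacle — is that (i) and (ii) pull against each other: forcing $\Disc_\mu(M)$ down (more global complexity) tends to introduce exactly the kind of local structure that breaks $\mathcal{P}$ on some window, and conversely making $\mathcal{P}$ robust on all $\exp(\tilde O(n^{1/4}))$ windows limits how hard $M$ can be. The admissible range $w(n)\le 10^{-3}\log n$ and the window exponent $1/4$ are exactly the slack needed to make the $4^{n}$-rectangle union bound of the lower bound and the $\exp(\tilde O(n^{1/4}))$-window union bound of the local-simplicity step both succeed for the same random $M$; reconciling these two regimes is the technical core of the proof.
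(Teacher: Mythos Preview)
Your high-level plan --- random sparse matrix, discrepancy for the lower bound, a local structural property plus a union bound for (ii) --- is exactly the paper's plan. But the proposal leaves the two genuinely nontrivial steps unresolved.

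The main gap is in (ii). None of your candidate properties $\mathcal{P}$ survive the union bound over all $n^{1/4}\times n^{1/4}$ windows. Take ``every row is $O(1)$-sparse'': for the discrepancy argument to yield $\RR(M)\ge w(n)$ you need each row of $M$ to carry $r=2^{\Theta(w(n))}=\omega(1)$ ones, and since $r<n^{0.01}<n^{1/4}$, an adversary choosing the column set $B$ can simply include all $r$ ones of some row, producing a submatrix with a row of weight $r$. The same obstruction kills ``$O(1)$ distinct rows'' and ``$O(1)$ rectangles''. The paper's key idea, which you are missing, is to replace a uniform sparsity condition by a \emph{degeneracy} condition: with high probability, every $a\times b$ submatrix with $a,b\le n^{1/4}$ contains a row or a column with at most two $1$'s. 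This is much weaker per submatrix (so the union bound goes through --- the dominant term is $\binom{n}{a}\binom{n}{b}(a^3p^3)^b$, which is $o(1)$ even summed over $a,b$), yet by a peeling induction it forces every such $F$ to decompose as a union of two forests. A forest is an edge-disjoint union of two star-forests, and a star-forest reduces to equality, so $\RR(F)=O(1)$.

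A secondary gap is in hitting $\RR(M)=w(n)$ exactly. ``Error-reduction and rounding'' does not do this: error reduction only raises cost, and there is no generic way to round cost down. If you set the row weight at $2^{w(n)}$ so that hashing gives the upper bound $w(n)+O(1)$, the discrepancy calculation only yields $\RR(M)\gtrsim \tfrac{1}{2}\log r = w(n)/2$, not $w(n)$. The paper instead takes $r=2^{3w(n)}$ (so discrepancy gives $\RR(M)\ge w(n)$), and then uses the observation that zeroing a single row changes $\RR$ by at most $1$: starting from $M$ and zeroing rows one at a time eventually reaches the zero matrix, so some intermediate matrix has $\RR$ exactly $w(n)$. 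Crucially, zeroing rows preserves the ``union of two forests'' property of every small submatrix, so (ii) survives this adjustment.
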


Let $w(n)= \floor{10^{-3}\log(n)}$, and for every sufficiently large $n$, let $M_n$ be the matrix that is  guaranteed to exist by \cref{main:thm2}. For smaller values of $n$, let $M_n$ be an arbitrary $n \times n$ Boolean matrix and let $\cM$ denote the corresponding sequence. By \cref{main:thm2}~(i), we have $\RR(\cM)=\Theta(\log(n))$, and by \cref{main:thm2}~(ii), $\cM$ is stable. 

It remains to bound $|\cl(\cM)|_n$. Let $F$ be an $n\times n$ matrix in $\cl(\cM)$. There are two cases to consider:

\begin{enumerate}
    \item $F$ is a submatrix of an $M_k$ for $k > n^2$. In this case, by \cref{main:thm2}~(ii), $\RR(F) =O(1)$. So by~\cref{prop:CountR}, the number of such matrices is bounded by $2^{O(n \log n)}$.
    \item $F$ is a submatrix of an $M_k$ with $n \le k \le n^2$. The number of such matrices is at most 
    $$n^2 {n^2 \choose n}^2 = 2^{O(n \log n)}.$$
\end{enumerate}

We conclude that the total number of $n\times n$ matrices in $\cl(\cM)$ is $ 2^{O(n \log n)}$ as desired. 

\section{Proof of \cref{main:thm2}}\label{sec:mainthm2}
We will use a probabilistic argument to show the existence of an $n \times n$ matrix $M$ that satisfies $\RR(M) \ge w(n)$, and the property (ii).  Note that modifying a row of a matrix can change its randomized communication complexity by at most $1$. Hence, to guarantee $\RR(M) = w(n)$, we can replace the rows of $M$ to all-zero rows, one by one, until we achieve $\RR(M) = w(n)$. We will also show that for our construction, (ii) will remain valid under such modifications. 

Let $M=[m_{ij}]_{n \times n}$ be selected uniformly at random from the set of all Boolean $n \times n$ matrices that have  exactly $rn$ number of $1$'s where $r=2^{3w(n)} \leq n^{0.01}$. Denote $p=\frac{rn}{n^2}=\frac{r}{n}$ to be the fraction of $1$'s.

Similarly, define $M'=[m'_{ij}]_{n\times n}$ be the Boolean matrix with independent entries, where each entry $m'_{ij}=1$ with probability $p$.  

\paragraph{Lower-bounding $\RR(M)$:}
Let $\epsilon=\frac{1}{3 \cdot 2^{w(n)}}$. We will show that with high probability, the discrepancy of $M$ is bounded by $\epsilon$, and thus $\RR(M)$ is large. 

Let $M_0=\{(i,j) \ | \ m_{ij}=0\}$ and  $M_1=\{(i,j) \ | \ m_{ij}=1\}$, and define $M'_0$ and $M'_1$ similarly from the entries of $M'$. Define the probability distribution $\mu$ on $[n] \times [n]$   as 
	$$\mu:(i,j) \mapsto \left\{ 
	\begin{array}{lcr}
	\frac{1}{2rn} & \qquad & m_{ij}=1 \\
	\frac{1}{2(n-r)n}& &  m_{ij}=0 \\
	\end{array}
	\right. $$
Note that $\mu$ is defined so that it assigns a total measure of $\frac{1}{2}$ uniformly to each of $M_0$ and $M_1$. 

Consider a fixed $a\times b$ combinatorial rectangle $R\subseteq [n]\times [n]$. Then, 
\begin{eqnarray*}
\Disc_\mu(R)&=& \left|\frac{|R \cap M_1|}{2rn}- \frac{|R \cap M_0|}{2(n-r)n}\right|= \left|\frac{|R \cap M_1|}{2rn}- \frac{|R|-|R \cap M_1|}{2(n-r)n}\right|  \\
&=&\left|\frac{n|R \cap M_1|- r|R|}{2nr(n-r)}\right|= \left|\frac{|R \cap M_1|- p|R|}{2r(n-r)}\right|\le \left|\frac{|R \cap M_1|- abp}{rn}\right|.
\end{eqnarray*}

Therefore, 
\begin{equation}\label{eqn:discr}
\Pr\left[\Disc_\mu(R) \ge \epsilon \right] \le \Pr\left[\left||R\cap M_1|- abp \right| \geq \epsilon rn\right]
\end{equation}

Note that
\begin{align*}
\Pr[|R\cap M_1|\geq abp + \epsilon r n] &\leq \Pr\left[|R\cap M'_1|\geq abp+\epsilon rn \ \big\vert \ |M'_1|\geq rn\right] \\ &\leq  \frac{\Pr\left[|R\cap M'_1|\geq abp+\epsilon rn\right]}{\Pr[|M'_1|\geq rn]} \\ &\leq \exp \left(-\Omega\left(\frac{(\frac{\epsilon rn}{abp})^2\cdot abp}{2+\frac{\epsilon rn}{abp}}\right) \right) = 2^{-\Omega(\epsilon^2 r n )}, 
\end{align*}
where the last inequality is the Chernoff bound $\Pr[|R\cap M'_1|-\mu \geq \delta \mu]\leq e^{\frac{-\delta^2\mu}{2+\delta}}$ applied with parameters $\delta= \epsilon r n/abp$ and $\mu=abp$. Similarly,
\begin{align*}
\Pr[|R\cap M_1|\leq abp - \epsilon r n] &\leq \Pr\left[|R\cap M'_1|\leq abp-\epsilon rn \ \big\vert \ |M'_1|\leq rn\right] \\ &\leq  \frac{\Pr\left[|R\cap M'_1|\leq abp+\epsilon rn\right]}{\Pr[|M'_1|\leq rn]} \\ &\leq  2^{-\Omega(\epsilon^2 r n )},
\end{align*}
where we applied the Chernoff bound $\Pr[|R\cap M'_1|-\mu \le \delta \mu]\leq e^{\frac{-\delta^2\mu}{2}}$ with the same parameters as above. Taking the union bound over the $2^{2n}$ rectangles $R$, and combining it with \cref{eqn:discr} and the bounds above, we obtain
$$
\Pr\left[\Disc_\mu(R) \ge \epsilon \right] \leq 2^{2n}2^{-\Omega(\epsilon^2 r n)}= o(1), 
$$
where we used the fact that $\epsilon = \frac{1}{3\cdot 2^{w(n)}}$, $r=2^{3w(n)}$, and $w(n)\rightarrow \infty$. Namely, $n= o(\epsilon^2 rn)$. 
Finally, applying the discrepancy lower bound of \cref{eq:discLow}, we obtain that for sufficiently large $n$, 
$$\Pr\left[\RR(M) \le w(n)\right] =  \Pr\left[\RR(M) \le 
\log\frac{1}{3\epsilon}\right] \le \Pr[\Disc_\mu(M) \ge \epsilon]  =  o(1). $$

\paragraph{Verifying (ii):}   We prove that with probability $1-o(1)$, for every $a,b\le \sqrt{n}$, every $a \times b$ submatrix of $M$ contains a row or a column with at most four $1$'s. Note that the statement is trivial when $\min(a,b)\le 4$, and hence, we fix  $a,b> 4$. 

If $a \ge b$, then the probability that there is an $a \times b$ submatrix such that each of its $a$ rows contains at least five $1$'s is bounded by
\begin{align*}
{n \choose a} \cdot {n \choose b} \cdot \frac{{b \choose 5}^a {n^2-5a \choose rn-5a}}{{n^2 \choose rn}} &\le n^{a+b}\cdot  \left(\frac{rnb}{n^2}\right)^{5a} \leq n^{2a}  \left(\frac{b}{n^{0.99}}\right)^{5a}\\ &\le n^{(2- 5\times 0.49) a} = o\left(\frac{1}{n}\right).  
\end{align*}
where we used $r=2^{3 w(n)}<n^{0.01}$ and $a>4$.
Similarly, if $a < b$, then the probability that there is an $a \times b$ submatrix such that each of its $b$ columns contains at least five $1$'s is bounded by $o\left(\frac{1}{n}\right)$.

Thus by a union bound over the $n$ choices of $a,b\leq \sqrt{n}$, the probability that there is  $a,b \in [\sqrt{n}]$ and an $a \times b$  submatrix where every column or row contains at least five $1$'s is bounded by $o(1)$.

Now suppose that every $a \times b$ submatrix $F$ of $M$ contains a row or a column with at most four $1$'s. We will show by induction on $a,b$ that in this case, every such $F$ corresponds to the  biadjacency matrix of a disjoint union of four bipartite graphs that are all forests. The base case of $a,b\leq 2$ is trivial. Consider a row (or a column) with at most four $1$'s, and let $e_1, e_2, e_3, e_4$ be the edges corresponding to these (at most) four entries. Removing this row from $F$ will result in a smaller submatrix, which by induction hypothesis, can be written as the union of four forests $\cF_1, \cF_2, \cF_3, \cF_4$. Now $F$ can be decomposed into the union of  four forests $\cF_i\cup  \{e_i\}$ for $i\in [4]$. 

The bound $\RR(F)=O(1)$ follows by first observing that each forest is an edge-disjoint union of two graphs, each   a vertex-disjoint union of stars. Hence, it suffices to show that the biadjacency matrix of any vertex-disjoint union of stars has $O(1)$  randomized communication complexity. Suppose that $G$ is a union of vertex-disjoint stars $S_1,\ldots, S_k$, with a bipartition $(U,V)$. Alice receives $u\in U$ and Bob receives $v\in V$, and they want to decide whether $(u,v)\in E(G)$, which is equivalent to whether $u$ and $v$ belong to the same star. To solve this problem, Alice maps her input $u$ to the index $i$ such that $u \in S_i$. Similarly, Bob maps $v$ to $j$ such that $v \in S_j$. Now they can use the randomized communication protocol for $\I_k$ to check whether $i=j$. This verifies (ii).

Finally, note that if $F$ is a union of four forests, then replacing a row of $F$ with an all-zero row   will not violate this property.  
\bibliographystyle{alpha}
\bibliography{refs}

\end{document}